\newtheorem{thm}{Theorem}[section]
\newtheorem{prop}[thm]{Proposition}
\theoremstyle{remark}
\theoremstyle{definition}
\newtheorem{ex}[thm]{Example}
\theoremstyle{definition}
\newtheorem{definition}{Definition}
      \newcommand{\ra}{\rightarrow}
      \def\@setcopyright{}
      \def\serieslogo@{}
\begin{document}

\author{David McCune}
\address{David McCune,  Department of Mathematics and Data Science, William Jewell College, 500 College Hill, Liberty, MO, 64068-1896}
\email{mccuned@william.jewell.edu} 

\author{Erin Martin}
\address{Erin Martin, Department of Mathematics, Brigham Young University, Provo, UT 84602}
\email{martin@mathematics.byu.edu} 

\author{Grant Latina}
\address{Grant Latina, William Jewell College, 500 College Hill, Liberty, MO, 64068-1896}
\email{latinag.19@william.jewell.edu}

\author{Kaitlyn Simms}
\address{Kaitlyn Simms, William Jewell College, 500 College Hill, Liberty, MO, 64068-1896}
\email{simmsk.19@william.jewell.edu}

\title[ A Comparison of Sequential RCV and STV ]{A Comparison of Sequential Ranked-Choice Voting and  Single Transferable Vote}

\begin{abstract}
The methods of single transferable vote (STV) and sequential ranked-choice voting (RCV) are different methods for electing a set of winners in multiwinner elections. STV is a classical voting method that has been widely used internationally for many years. By contrast, sequential RCV has rarely been used, and only recently has seen an increase in usage as several cities in Utah have adopted the method to elect city council members. We use Monte Carlo simulations and a large database of real-world ranked-choice elections to investigate the behavior of sequential RCV by comparing it to STV. Our general finding is that sequential RCV often produces different winner sets than STV. Furthermore, sequential RCV is best understood as an excellence-based method which will not produce proportional results, often at the expense of minority interests.\end{abstract}

 \subjclass[2010]{Primary 91B10; Secondary 91B14}

 \keywords{single transferable vote, sequential RCV, simulation, empirical results}

\maketitle

\section{Introduction}
Over the last 20 years, various forms of ranked-choice voting (RCV) have been adopted for political elections across the United States. The single-winner version of RCV (referred to in the social choice literature as instant runoff voting, the plurality elimination rule, the Hare method, etc.) has been used for municipal elections in cities such as Minneapolis, MN, Oakland, CA, and San Francisco, CA. Single-winner RCV has also been used for statewide and federal elections in Alaska and Maine. A multiwinner version of RCV known as single-transferable vote (STV) has been used for some political elections in the US, though not nearly as widely. A version of STV has been used for multiwinner city council and school committee elections in Cambridge, MA since 1941; more recently, STV has been used for some municipal elections in Albany, CA, Eastpointe, MI, and Minneapolis, MN. For a survey of the different ranked-choice voting methods currently used in the US, see \cite{San}.

In 2019, the cities of Payson and Vineyard in Utah introduced a new multiwinner voting method, called sequential RCV, into the US political landscape. Following the trend of other cities which have adopted forms of RCV for municipal elections, the residents of Payson and Vineyard decided to use sequential RCV to fill seats on their city councils. To the best of our knowledge, prior to its use in Utah sequential RCV has  been used only for a handful of political elections in Australia, most of which occurred prior to 1948 \cite{F, S}. In 2021, the use of sequential RCV expanded to several more cities in Utah. A new method calls for rigorous analysis, which in this case has not been  done previously because of the infrequency of the method's use.

In this paper we analyze sequential RCV primarily by comparing it to STV, a method designed to achieve proportional representation and the most common real-world multiwinner voting method which uses preference ballots. Our general finding is that sequential RCV does not produce results that agree with proportional representation, instead choosing winner sets which align with an excellence-based paradigm. In particular, sequential RCV tends to select the Condorcet committee (as defined in \cite{G} and \cite{R}) as the winner set, thereby giving winner sets in which the candidates are similar to each other (rather than providing a broad base of representation to the electorate). This aspect of sequential RCV has been anticipated in the popular press \cite{H} and in the book \cite{F}; our contribution is that we provide mathematical and empirical analysis. We use elections generated by Monte Carlo simulation with a large number of runs, as well as a sizable database of recent multiwinner elections, to support such claims.


There is a large social choice literature which compares the performance of different single-winner voting methods. Historically, most of this literature was theoretical (see \cite{CGZ, G, GL, WP} for recent examples), mainly due to the lack of vote data and the computing power to process it. In recent years the social choice empirical literature on single-winner voting methods has increased \cite{DGK, GSZ, MMa, PPR, RKH, Song} as more ranked-choice vote data has become available. The social choice literature regarding multiwinner voting methods is much smaller, presumably because such methods are harder to study from a theoretical point of view. There has recently been an increased interest in multiwinner voting methods and their properties \cite{CJMW, EFSS, FSST, S-FF}, although  studies which specifically address the kinds of questions analyzed in this article are rare. An example of a study similar to ours which compares the frequency with which two multiwinner methods agree is \cite{DKT}.  Their work primarily uses Monte Carlo simulation to compare the performance of a Chamberlin-Courant voting method to other classical multiwinner methods such as $k$-plurality and $k$-Borda. We also use simulations to obtain theoretical results but we supplement our work  with empirical results using a large database of real-world multiwinner political elections from Scotland. \cite{EFLSST} studies how different multiwinner voting methods produce different winner sets over a two-dimensional Euclidean domain.

The paper is structured as follows. In Section \ref{prelim} we provide our notation and define STV and sequential RCV. Section \ref{multiwinner_wants} contains a brief discussion of the potential goals of multiwinner voting methods. In Section \ref{simulations} we provide theoretical results comparing sequential RCV and STV using Monte Carlo simulations. Sections \ref{scottish_data} and \ref{utah_data} present empirical results using multiwinner ballot data from a large database of Scottish STV elections and a small set of sequential RCV elections from Utah. Sections \ref{simulations}, \ref{scottish_data}, and \ref{utah_data} compare the behavior of the two methods using concepts such as a Condorcet committee (defined below).  Section \ref{conclusion} concludes.

\section{Preliminaries: Sequential RCV and STV}\label{prelim}

Let $n$ represent the number of candidates in an election, $S$ the number of seats available (which is the size of the winner set), and $V$ the number of voters. In the elections we study voters cast their votes using preference ballots, which allow each voter to provide a preference ranking of the candidates. After the election the ballots are aggregated into a \emph{preference profile}, which shows the number of each type of preference ballot cast. Table \ref{profile} provides an example of a preference profile with four candidates $A$, $B$, $C$, and $D$. The number 1801 denotes that 1801 voters cast a ballot ranking $A$ first, $B$ second, $C$ third, and $D$ fourth. The other numbers across the top row convey similar information about the number of ballots cast with the given ranking. We use capital letters to denote candidates but reserve the letter $P$ for a preference profile. We refer to a pair $(P, S)$ as an \emph{election}.

In a theoretical electoral setting such as what we analyze in Section \ref{simulations}, we assume each voter provides a complete ranking of the candidates. In practice voters often provide only a partial ranking, sometimes ranking only a single candidate on the ballot. The preference profile in Table \ref{profile} contains many  partial ballots.

\begin{table}[htb]
  \centering

\begin{tabular}{l|c|c|c|c|c|c|c|c|c}
Num. Voters & 1799 & 1801 & 100 & 901 & 900 & 498 & 2000 & 1400 & 601\\
\hline
1st choice & $A$ & $A$ & $A$ & $B$ & $B$& $C$& $C$ & $D$ &$D$\\
2nd choice & $B$ & $B$ & $C$ & $C$ & $D$ & $B$ & $D$ & $B$ & $C$\\
3rd choice & & $C$& $D$ &$A$ & & $D$ & $A$ & &\\
4th choice & &$D$ &  & $D$&  &$A$&&&\\
\end{tabular}

\caption{An example of a preference profile with $n=4$ and $V= 10000$.}
  \label{profile}
  \end{table}

Before defining STV and sequential RCV, we first define the method of instant runoff voting (IRV), which is the single-winner version of both multiwinner methods. That is, when $S=1$ both STV and sequential RCV output the IRV winner. We note that in popular discourse the term ``ranked-choice voting'' is often used interchangeably with IRV; we choose the term that is more common in the social choice literature.

\begin{definition}
 Under the method of \textbf{instant runoff voting}, if a candidate receives an initial majority of first-place votes then that candidate is the winner. Otherwise the candidate with the fewest first-place votes is eliminated and their votes are transferred to the next candidate on each of their ballots who has not been previously eliminated. This process continues until a surviving candidate achieves a majority of the remaining first-place votes, at which point the process stops and this candidate is declared the winner.
\end{definition}

We illustrate this definition using the preference profile from Table \ref{profile}.

\begin{ex}\label{IRV_example}
In Table \ref{profile} the initial first-place vote totals for $A$, $B$, $C$, and $D$ are 3700, 1801, 2498, and 2001, respectively. No candidate achieves a majority and thus $B$ is eliminated.  As a result 901 votes are transferred to $C$ and 900 are transferred to $D$, creating updated first-place votes totals of 3700, 3399, and 2901 for $A$, $C$, and $D$ respectively. No candidate achieves a majority and thus $D$ is eliminated. As a result 601 votes are transferred to $C$ and the 2300 votes corresponding to the voters who ranked only $B$ and $D$ on their ballots are removed from the election. (These ballots are said to be \emph{exhausted} and cannot be transferred to either $A$ or $C$.) Candidate $C$ now defeats $A$ with 4000 votes to 3700, and $C$ is the IRV winner.
\end{ex}

In practice, the preference profile of an election is usually too large to display and thus election offices display how the election unfolds using a \emph{votes-by-round table}. The left table of Table \ref{votes_IRV} shows how the instant runoff election for the preference profile in Table \ref{profile} unfolds.

We now define the methods of sequential RCV and STV.

\begin{definition}
 Given an election $(P, S)$, \textbf{sequential RCV} chooses a winner set of size $S$ as follows. The first seat is given to the IRV winner; call this candidate $W_1$. Remove $W_1$ from all ballots in the original preference data in $P$  and give the second seat to the IRV winner under the modified data; call this candidate $W_2$. Now remove both $W_1$ and $W_2$ from all ballots in the original preference data in $P$ and give the third seat to the IRV winner under the modified vote data with both previous winners removed. Continue this process until all $S$ seats are filled.
\end{definition}

Sequential RCV is perhaps better named ``iterated IRV'' but we use the term that is used by the Utah election offices which implemented the method.

\begin{ex}
To calculate the sequential RCV winner set for the election $(P, 2)$, where $P$ is the preference profile from Table \ref{profile}, we first find the IRV winner. By Example \ref{IRV_example},  the first seat goes to $C$. To calculate the second winner we remove $C$ from all ballots in Table \ref{profile}, shifting candidates up one slot to fill the empty slots left by $C$, and then run IRV again. The resulting votes-by-round table is shown in the right table of Table \ref{votes_IRV}. With candidate $C$ removed the initial vote totals for $B$ and $D$ both increase, but $B$ still has the fewest first-place votes and is eliminated first. As a result, 901 votes are transferred to $A$ (note that $A$ receives these votes because $C$ is removed) and 900 are transferred to $D$, securing a majority for $D$. Thus the sequential RCV winner set for $(P, 2)$ is $\{C,D\}$. 

If we wished to elect three winners then we would remove both $C$ and $D$ from all ballots in Table \ref{profile} and again use IRV. In this case there are only two candidates remaining and $A$ would win the election in one round because $A$ starts the modified election with an extra 2000 first-place votes from the voters who ranked $C$ first, $D$ second, and $A$ third. Thus the winner set under sequential RCV for the election $(P, 3)$ is $\{A, C, D\}$.
\end{ex}

\begin{table}[htb]
  \centering
  
  \begin{tabular}{ccc}
  
  \begin{tabular}{c|ccc}
  \hline
  Candidate & \multicolumn{3}{c}{Votes by Round}\\
  \hline
  $A$ & 3700 & 3700 & 3700 \\
  $B$ & 1801 &          &           \\
  $C$ & 2498 & 3399 & 4000\\
  $D$ & 2001 & 2901 & \\
  \hline
  \end{tabular}
  
  & &
  
    \begin{tabular}{c|cc}
  \hline
  Candidate & \multicolumn{2}{c}{Votes by Round}\\
  \hline
  $A$ & 3700 & 4601  \\
  $B$ & 2299 &                     \\
  $\xcancel{C}$ & &\\
  $D$ & 4001 & 5399 \\
  \hline
  \end{tabular}\\
  
  \end{tabular}

\caption{(Left) The votes-by-round table for the preference profile in Table \ref{profile} when using instant runoff voting. (Right) The resulting votes-by-round table when removing $C$ from the preference profile and using instant runoff voting. }
  \label{votes_IRV}
  \end{table}

 To the best of our knowledge sequential RCV has rarely been used for political elections and therefore has not been well-studied. The method was used for some elections in Australia under the name ``preferential block voting'' \cite{S}, but otherwise we can find no record of the method's use outside the US. Election officials in Utah seem to have invented sequential RCV and, since such officials are not in the habit of explaining their thinking, the method's motivation is not clear. Presumably the method's creators were looking for a quick way to generalize IRV to the multiwinner setting. Other than in Australia and Utah we are aware of only one other use of this method, when it was used for a municipal election in Portland, Maine in 2021. In this case the method was referred to as ``multi-pass instant runoff voting'' in city communications. The city stumbled into the method's use as the city charter insisted that candidates who win a ranked-choice election must secure a majority of the vote, and thus sequential RCV seemed like the required voting method for a multiwinner election in 2021.
  
  By contrast, the method of single transferable vote (STV) is widely used across the world, and is by far the most popular voting method for multiwinner political elections which use preference ballots. For example, STV is used for local government elections in Ireland, Northern Ireland, and Scotland, as well as elections for the national legislature in Australia, Ireland, and Malta. It is difficult to provide a complete definition of STV in a concise fashion. Therefore, we provide a high level description which we illustrate using  the preference profile in Table \ref{profile}. The formal description of the rules as implemented in Scotland (the source the empirical data in this paper) can be found at \url{https://www.legislation.gov.uk/sdsi/2007/0110714245}.

The method of STV proceeds in rounds. In each round, either a candidate earns enough votes to be elected or no candidate is elected and the candidate with the fewest (first-place) votes is eliminated. The number of votes required to be elected is called the \emph{quota}, and is calculated by \[\text{quota } = \left\lfloor \frac{V}{S+1}\right\rfloor +1.\]

If no candidate reaches quota in a given round then the candidate with the fewest first-place votes is eliminated, and this candidate's votes are transferred to the next candidate on their ballots who has not been elected or eliminated. If a candidate reaches quota, that candidate is elected and the votes they receive above quota (\emph{surplus votes}) are transferred in a fashion similar to that of an eliminated candidate, except the surplus votes are transferred in proportion to the number of ballots on which each other candidate appears.  To explain how these transfers work, suppose candidate $A$ is elected with a total of $a$ votes and a surplus of $A_s$ votes (so that $A_s = a - $ quota), and candidate $B$ is the next eligible candidate on $b$ of these ballots. Rather than receive $b$ votes from the election of $A$ candidate $B$ receives $(A_s/a)b$ votes, resulting in a fractional vote transfer. The method continues in this fashion until $S$ candidates are elected, or until some number $S'<S$ of candidates have been elected by surpassing quota and there are only $S-S'$ candidates remaining who have not been elected or eliminated.

\begin{ex}\label{STV_example}
Let $P$ be the preference profile from table \ref{profile}. We find the winner set of the election $(P, 2)$ under STV.

Since $S=2$, the quota is $\lfloor 10000/3 \rfloor + 1 = 3334$. Candidate $A$ achieves quota in the first round and their $3700-3334 = 366$ surplus votes must be transferred. Candidate $B$ is ranked second on 3600 of the ballots on which $A$ is ranked first and therefore receives $(3600/3700)366=356.11$ of $A$'s surplus votes. The remaining $(100/3700)366=9.89$ surplus votes are transferred to $C$. At this point in the vote-transfer process (see Table \ref{votes_STV}) none of the remaining candidates achieves quota. Candidate $D$ is eliminated and the 1400 votes transferred to $B$ secures the second seat for $B$. Thus the winner set under STV is $\{A,B\}$, disjoint from the winner set $\{C,D\}$ under sequential RCV.

If an election contains more candidates or if more seats are available, the method continues in a similar fashion until all seats are filled.
\end{ex}
  
  \begin{table}[htb]
  \centering
  
   \begin{tabular}{c|ccc}
  \hline
  Candidate & \multicolumn{3}{c}{Votes by Round}\\
  \hline
  $A$ & \textbf{3700} &  &  \\
  $B$ & 1801 &    2157.11      &      \textbf{3557.11}     \\
  $C$ & 2498 & 2507.89 & 3108.89\\
  $D$ & 2001 & 2001 & \\
  \hline
  \end{tabular}
  
  \caption{The votes-by-round table for the preference profile in Table \ref{profile} when using STV with 2 seats. The bold numbers indicate when a candidate achieves quota.}
  \label{votes_STV}
  \end{table}

As the above definitions and examples suggest, these two voting methods behave quite differently and can produce winner sets with little or no overlap. We close this section by showing that if the number of candidates is large relative to the number of seats, the two methods can produce disjoint winner sets. Thus the choice between these two methods can have profound electoral effects. We prove the proposition for the case $n=2S$ but note the result holds for $n\ge 2S$ because we can always add additional candidates who receive negligible voter support. 

\begin{prop}\label{first_prop}
Let $S\ge 2$ and $n=2S$. Then there exists an election $(P, S)$ such that the set of winners under STV is disjoint from the set of winners under sequential RCV.
\end{prop}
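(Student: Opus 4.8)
The plan is to exhibit, for each $S \ge 2$, an explicit preference profile $P$ on the $n = 2S$ candidates whose STV and sequential RCV winner sets are disjoint, generalizing the four-candidate profile of Table \ref{profile} (which already settles $S = 2$, since there STV returns $\{A,B\}$ while sequential RCV returns $\{C,D\}$). I would partition the candidates into a \emph{factional slate} $\mathcal{F} = \{A_1, \dots, A_S\}$, engineered to be the STV winners, and a \emph{consensus slate} $\mathcal{C} = \{C_1, \dots, C_S\}$, engineered to be the sequential RCV winners, designing the ballots so that $\mathcal{F}$ and $\mathcal{C}$ play the roles that $\{A,B\}$ and $\{C,D\}$ play in the example.

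The proof then splits into two verifications. First, that STV elects $\mathcal F$: I would give each $A_i$ a dedicated first-place bloc so that, through a combination of reaching the quota $\lfloor V/(S+1)\rfloor + 1$ directly and absorbing transfers from eliminated consensus candidates and from surpluses, all $S$ factional candidates cross quota while every $C_j$ is eliminated before doing so. Second, that sequential RCV elects $\mathcal C$: I would show that in round $r$, once $C_1, \dots, C_{r-1}$ have been deleted, the resulting IRV election is won by $C_r$. The driving mechanism is that the factional candidates are deliberately \emph{polarizing} — strong in first-place support but weak in transfers and pairwise comparisons — so that no $A_i$ ever attains an IRV majority, whereas the consensus candidates, though weaker on first preferences, survive the early eliminations and accumulate enough transferred votes to cross the $V/2$ majority threshold, exactly as $C$ overtakes $A$ in Example \ref{IRV_example}.

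The hard part is satisfying both requirements with a single profile, because the two methods couple all of the candidates together and pull the vote counts in opposite directions: concentrated first-place support is what lets a candidate reach the STV quota, yet that same concentration tends to produce IRV victories, which is precisely what must be prevented for the factional slate. The key leverage is the gap between the STV quota $\lfloor V/(S+1)\rfloor + 1 \approx V/(S+1)$ and the IRV majority threshold $\approx V/2$: a factional candidate can be made quota-strong but majority-weak, while a consensus candidate is made quota-weak but majority-reachable via transfers. I would also have to control the elimination orders in both methods simultaneously, ensuring in particular that the factional candidates with the smallest first-place totals are eliminated early in each IRV run (feeding the consensus slate) while still reaching quota in STV through transfers, mirroring the dual role of candidate $B$ in the worked example.

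A tempting shortcut — juxtaposing independent relabelled copies of the four-candidate gadget on disjoint candidate and voter sets — does not work directly, because STV's quota is global: splitting into $k$ copies, each carrying two seats and $V/k$ voters, gives an internal threshold $V/(3k)$ that does not match the global quota $V/(2k+1)$ for $k \ge 2$. Recalibrating the copies to the global quota (and slightly perturbing them to avoid ties in the one-at-a-time sequential RCV rounds) is possible in principle, but the bookkeeping is no simpler than the direct global construction, so I would favor writing down one explicit profile and verifying the two claims above by direct computation.
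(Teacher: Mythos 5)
There is a genuine gap: your proposal is an architecture, not a proof. The statement is a pure existence claim, so its proof \emph{is} the explicit profile together with the verification of the two computations, and you defer exactly that (``I would favor writing down one explicit profile and verifying the two claims above by direct computation''). Worse, the design constraints you set yourself are not shown to be simultaneously satisfiable, and they are tighter than you acknowledge. If every factional candidate $A_i$ holds a dedicated near-quota bloc, the factional slate absorbs roughly $SV/(S+1)$ of the first-place vote, leaving the consensus slate only about $V/(S+1)$ --- which is essentially the quota. So the internal consensus transfer chain that you need for the IRV snowball (weakest consensus candidate eliminated, votes passed up the chain) threatens to push a consensus candidate over quota in STV unless the STV count provably terminates before the chain concentrates; and conversely, for $C_r$ to cross $V/2$ in round $r$ you need factional ballots to transfer toward (or exhaust in favor of) $C_r$, which strains your stipulation that factional candidates are ``weak in transfers.'' You correctly identify the quota-versus-majority gap as the available leverage, but nothing in the sketch demonstrates that the $S$ coupled IRV elimination orders and the STV count can all be controlled at once --- and that is the entire difficulty of the proposition.

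The paper's proof avoids this difficulty rather than solving it, and the contrast is instructive. It constructs only \emph{one} gadget, on $S+1$ candidates: $C_1$ holds a $0.5-\epsilon$ first-place share, and its surplus upon election lifts $C_2,\dots,C_{S-1},C_{S+1}$ over quota in the second STV round --- so the STV winners other than $C_1$ are precisely the \emph{weak} candidates, the opposite of your quota-strong factional slate. In IRV those same weak candidates are eliminated one by one, each feeding $C_S$, who defeats $C_1$ in the final round since $C_1$'s share never exceeds $1/2$; hence the IRV winner $C_S$ misses the STV winner set (Table \ref{prop_table} realizes this for $S=4$). Disjointness for all $S$ seats then comes from a cloning trick, not from $S$ genuinely distinct IRV victories: append $C_{S+2},\dots,C_{2S}$ so that $C_{S+2}$ sits immediately below $C_S$ on every ballot and $C_{S+i}$ immediately below $C_{S+i-1}$. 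These shadow candidates never disturb the STV count (they have no first-place support), and after $C_S$ is elected and deleted, $C_{S+2}$ inherits $C_S$'s exact ballot position, so each subsequent sequential RCV round is a verbatim repeat of the first. This reduces your ``hard part'' --- controlling $S$ coupled IRV runs plus one STV run --- to controlling a single IRV run plus one STV run. If you want to salvage your write-up, adopt the shadow-chain idea: build one IRV-winner-outside-STV gadget and clone, rather than engineering a full consensus slate of independent IRV winners.
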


\begin{proof}
We first construct an election with $S+1$ candidates in which the instant runoff winner is not a member of the winner set under STV with $S$ seats. It is then straightforward to construct an election as described in the Proposition statement for $S=2n$.

Label the candidates $C_1, \dots, C_{S+1}$ where the candidates are ordered so that $C_1$ has the most first-place votes, $C_2$ has the second-most first-place votes, and so on. Let $C_1$ have a vote share of $0.5-\epsilon$ for small $\epsilon$ and distribute the rest of the first-place votes to the other candidates so that their first-place vote shares are roughly equal, which ensures that $C_S$ does not surpass quota in the first round. Distribute $C_1$'s second-place votes to $C_2, \dots, C_{S-1}, C_{S+1}$ so that each of these candidates achieves quota upon the election of $C_1$. Note this is possible because $C_S$'s initial first-place vote total is smaller than quota. For each voter who ranks the candidates $C_2, \dots, C_{S-1}, C_{S+1}$ first, give $C_S$ the voter's second ranking. The rest of the rankings do not matter and can be completed in an arbitrary manner.

For such a construction, we claim that the instant runoff winner is $C_S$, who is not a member of the STV winner set with $S$ seats. To see this, note that under the instant runoff algorithm $C_{S+1}$ is eliminated first, and all of their votes are transferred to $C_S$. Since the bottom $S$ candidates all have roughly the same initial first-place vote share, $C_S$ now has more first-place votes than $C_{S-1}$, who is eliminated and all of their votes are transferred to $C_S$. This process continues until $C_S$ defeats $C_1$ head-to-head in the final round because $C_1$'s vote share of $0.5-\epsilon$ never surpasses 0.5 under the instant runoff algorithm by construction. Furthermore, if we run the STV algorithm with $S$ seats then $C_1$ achieves quota in the first round and their surplus votes are distributed so that all candidates other than $C_S$ achieve quota in the second round. Thus $C_S$ does not win a seat under STV.

Table \ref{prop_table} illustrates one possible realization of this construction for $S=4$ and 10,000 voters (note that quota in this election is 2001 votes). As $S\rightarrow\infty$ the construction will eventually require more than 10,000 voters, but this is not an issue if we let the size of the electorate also approach infinity. Note that in this election the STV winner set is $\{C_1, C_2, C_3, C_5\}$ because when $C_1$ achieves quota in the first round, the surplus 2997 votes are split equally among $C_2$, $C_3$, and $C_5$, causing each of these candidates to surpass quota in the second round. However, the instant runoff winner is $C_4$ because under the instant runoff algorithm $C_5$ is eliminated first, transferring 1249 votes to $C_4$. $C_3$ is eliminated next, followed by $C_2$, and in the final round $C_4$ defeats $C_1$ with 5002 votes to 4998.

 \begin{table}[htb]
  \centering

\begin{tabular}{l|c|c|c|c|c|c|c}
Num. Voters & 1666 & 1666 & 1666 & 1252 & 1251 & 1250 & 1249\\
\hline
1st choice & $C_1$ & $C_1$ & $C_1$& $C_2$& $C_3$ & $C_4$ & $C_5$\\
2nd choice & $C_2$ & $C_3$ & $C_5$ & $C_4$ & $C_4$ & $X$ & $C_4$\\
 $\vdots$ &&&&&&\\
\end{tabular}

 \caption{A preference profile in which the IRV winner is not a member of the STV winner set of the election $(P,4)$.}
  \label{prop_table}
  \end{table}

To create an election as described in the Proposition statement, we add candidates $C_{S+2}, \dots, C_n$ to the election so that on every ballot $C_{S+2}$ is ranked just below $C_S$ and $C_{S+i}$ is ranked just below $C_{S+i-1}$ for all $3\le i \le S$. Then the winner set under STV is still $\{C_1, \dots, C_{S-1}, C_{S+1}\}$ and the winner set under sequential RCV is $\{C_S, C_{S+2}, \dots, C_{2S}\}$. The latter statement is true because after $C_S$ wins the first seat under IRV they are replaced on all ballots by $C_{S+2}$, who then wins the next seat under IRV, etc. \end{proof}

\section{What Do We Want from a Multiwinner Voting Method?}\label{multiwinner_wants}

The goal of a single-winner voting method is generally quite clear: we wish to select the single ``most deserving'' candidate. Of course, different methods disagree about what is meant by \emph{most deserving}; for example, some methods (such as IRV) insist that a candidate who receives a majority of first-place votes must be the winner while other methods (such as Borda count) do not make such a requirement. Even though different single-winner methods disagree on the evaluation of the single best candidate, the overall goal of these methods is the same. As nicely outlined in \cite{FSST}, this is no longer the case for multiwinner voting methods. When choosing a set of winners of size at least two, the objective of a method can change based on context. For example, if  sports coaches were to hold an election to choose the top three athletes in their league so that these athletes receive a prize, the voting method used should try to select the ``most deserving'' three candidates. In the language of \cite{FSST}, this context calls for an excellence-based voting method. However, in a political setting in which proportional representation is the goal we no longer necessarily care about the most excellent candidates. To see why, suppose that the three ``most deserving'' candidates (by some measure) are extremely similar to each other in their political ideology. If we were to elect a winner set of size three, it would generally be undesirable to elect these three similar candidates if our goal is proportional representation.

The tension between excellence and proportionality can perhaps best be illustrated by vote data as shown in Table  \ref{small_profile}. If $P$ is the profile in the table then the winner set for the election $(P, 2)$ under STV is $\{A,C\}$. This outcome achieves proportional representation since the first-place votes are split roughly equally between $A$ and $C$. However, note that more voters prefer $B$ to $C$ than the reverse, and thus in an excellence-based setting (such as giving an award to two athletes) a strong argument can be made that the appropriate winner set is $\{A,B\}$, which is the winner set of $(P, 2)$ under sequential RCV.

\begin{table}[htb]
  \centering
  
   \begin{tabular}{l|cc}
   Num. Voters & 52 & 48\\
   \hline
   1st choice & $A$ & $C$\\
   2nd choice & $B$ & $D$\\
   3rd choice & $C$ & $A$\\
   4th choice & $D$ & $B$\\
   \end{tabular}
    \caption{A preference profile which illustrates the fundamental difference between sequential RCV and STV.}
  \label{small_profile}
  \end{table}

There are  other multiwinner electoral settings than those that call for an excellence-based voting method or a method that attempts to achieve proportional representation, but these two settings are the focus of this article. STV attempts to achieve proportional representation by construction \cite{D}; sequential RCV, on the other hand, seems to be an excellence-based method. To measure this tension, we use the notion of a \emph{Condorcet committee} as defined in \cite{G} and \cite{R}; such a set of candidates is a natural generalization of the classical single-winner notion of a Condorcet winner, a candidate who defeats all other candidates in a head-to-head matchup.

\begin{definition}
A subset of candidates $\mathcal{C}$ is a \textbf{Condorcet committee} if for each pair of candidates $(A,B)$ where $A \in \mathcal{C}$ and $B \not\in\mathcal{C}$, more voters prefer $A$ to $B$ than prefer $B$ to $A$.
\end{definition}

The original definition of a Condorcet committee assumes that all voters provide a complete ranking of the candidates. We do not assume such a setting, and thus we must choose how to handle head-to-head comparisons when one or both of the candidates under consideration are left off some ballots. We choose to interpret partial ballots under the \emph{weak order model} \cite{PPR} wherein all candidates left off the ballot are assumed to be tied for last place on these ballots. This is the model of partial ballots which is used by election offices that implement ranked-choice elections, although such offices do not use this language.

In Table \ref{small_profile}, the Condorcet committee is $\{A,B\}$ because each of these candidates wins a head-to-head matchup against $C$ or $D$. Note that a Condorcet committee of size $S$ does not necessarily exist; the reader can check that for the vote data in Table \ref{profile} a Condorcet committee of size two does not exist. Because sequential RCV is an excellence-based method and a Condorcet committee is an excellence-based notion of a committee of a certain size, we hypothesize that when such a committee exists it is more likely to be selected by sequential RCV than STV.  In subsequent sections we use the notion of a Condorcet committee to measure the degree to which each voting method produces an excellence-based result.

In addition to the concept of a Condorcet committee, we consider two other ways of comparing the performance of the two voting methods. These notions measure the degree to which voters achieve some level of representation in the winner set; we assume that a voter's $S$ top-ranked candidates on their ballot is the voter's desired winner set. To motivate the definitions, note that if we choose a winner set of $\{A,B\}$ for the election in Table \ref{small_profile} then 48\% of the electorate receives none of their top two candidates in the winner set, and in some sense these voters are unrepresented. On the other hand, if the winner set were $\{A,C\}$ or even $\{A,D\}$ then 100\% of the electorate is represented by at least one of the candidates they wanted in the winner set, and therefore a winner set of $\{A,C\}$ achieves proportional representation better than $\{A,B\}$. On the other hand, it the winner set is $\{A,B\}$ then 52\% of the electorate achieves their desired winner set, getting everything they want, while 48\% of the electorate receives nothing. A winner set of $\{A,B\}$ maximizes the percentage of voters who receive precisely the winner set they desire; in a proportional representation setting, we do not wish to maximize this percentage. These notions of minimizing the percentage of voters who receive no representation or maximizing the percentage of voters who receive maximal representation can help elucidate the differences between sequential RCV and STV.

\begin{definition}
Given an election $(P, S)$ and a winner set $\mathcal{W}$, the election's \textbf{degree of misrepresentation} is the percentage of voters for whom none of the top $S$ candidates on their ballots are in $\mathcal{W}$.
\end{definition}

While we are unaware of others making a definition equivalent to this, the degree of misrepresentation is inspired by the Chamberlin-Courant voting rule when using an approval-based weighting vector consisting of $S$ ones follows by $n-S$ zeroes \cite{CC}. See \cite{DKT} for an approachable introduction to the Chamberlin-Courant method. Intuitively, a method based on proportionality should try to choose a winner set which minimizes the degree of misrepresentation, ensuring that as many voters as possible are represented by at least one candidate whom they wanted in the winner set. In an excellence-based context such an outcome may not be important.

\begin{definition}
Given an election $(P, S)$ and a winner set $\mathcal{W}$, the election's \textbf{degree of maximal representation} is the percentage of voters for whom all of the top $S$ candidates on their ballots are in $\mathcal{W}$.
\end{definition}

The degree of maximal representation has not been previously defined, presumably because no voting method is built with the purpose of maximizing this percentage (although this may be an unintended consequence of using excellence-based methods). Because sequential RCV is an excellence-based method while STV tries to achieve proportionality, we hypothesize that sequential RCV will generally produce higher degrees of either type than STV.

If a voter casts a ballot with fewer than $S$ candidates ranked then we must decide whether to count this voter toward the degree of maximal representation if every candidate ranked on the ballot wins a seat. The spirit of the maximal representation definition is that we want to measure the amount of voters who get everything they want, and therefore we choose to count such a voter under our definition.

We now present our results, starting with theoretical results obtained from Monte Carlo simulations.

\section{Simulation Results}\label{simulations}

In this section we estimate the probabilities that the methods of sequential RCV and STV produce the same set of winners for various choices of $n$ and $S$ under two random models of voter behavior. We also estimate the conditional probability that each method selects the Condorcet committee, assuming one exists, and we investigate the degrees of misrepresentation and maximal representation under the models. Our estimates are obtained from Monte Carlo simulations where we use the random models to create the ballot data for a given election. The two models are the \emph{impartial culture model} and the \emph{impartial anonymous culture model}, both of which are classical random models which are widely used in the theoretical social choice literature. In the impartial culture model each voter chooses one of the $n!$ preferences rankings at random, essentially voting by rolling a die. In the impartial anonymous culture model, the vote distribution across the top of the preference profile is chosen at random.

\begin{definition}
Under the \textbf{impartial culture (IC) model}, each of the $V$ voters chooses one of the $n!$ complete candidate rankings at random.
\end{definition}

\begin{definition}
Under the \textbf{impartial anonymous culture (IAC) model} the list of numbers $(v_1, \dots, v_{n!})$ across the top of the preference profile is chosen at random, where the candidate rankings in each column are ordered lexicographically. That is, the preference profile is determined by choosing a vote distribution $(v_1, \dots, v_{n!})$ at random under the constraints $v_i\ge 0$, $\sum v_i = V$.
\end{definition}

  Generating a preference profile at random under the IC model is conceptually straightforward, as we simply let each of the $V$ voters roll an $n!$-sided die. Generating an election at random under the IAC model for a fixed $V$ is more challenging. For this task we use code developed by Awde et al. \cite{ADKRT}, which uses an urn-based probabilistic model. We note that if we wish to choose a percentage distribution of votes at random, letting $V\ra \infty$, then the task of generating an election at random under the IAC model is much simpler (see \cite{LLV}, for example).

Our simulations work as follows. For each choice of $n \in \{3,4,5,6\}$ and $S \in \{2, \dots, n-1\}$, for each model we generate 100,000 elections at random using $V=1001$. For each generated election we check if the sequential RCV winner set equals the STV winner set, and we check if each winner set equals the Condorcet committee (if one exists).  Due to limits of computation time we did not run simulations for $n\ge 7$. All code is written in Python 3.

The estimated probabilities given by the simulations are shown in Table \ref{IC_IAC_simulations}. For example, the table shows that if $n=3$ and $S=2$ then the two methods chose the same winner set in 78.8\% of the generated elections under the IC model. The column ``1 Winner Diff.'' shows the percentage of elections in which neither method produced a tie and the winner sets differed by only a single candidate. For elections in which $S<n-1$ it is possible that the winner sets differ by more than one candidate; the ``2 Winners Diff.'' column shows the percentage of elections in which the winner sets contained two different winners, assuming neither method produced a tie. We see, for example, that under either model when $n\in \{4,5,6\}$ and $S=2$, the simulations returned a handful of elections in which the winner sets were disjoint. When $S=3$ the simulations never returned disjoint winner sets, showing that elections like what  are constructed in Proposition \ref{first_prop} are very rare.

 \begin{table}[htb]
  \centering
  
   \begin{tabular}{c|>{\centering\arraybackslash} m{1.8 cm}|>{\centering\arraybackslash} m{1.8 cm}|>{\centering\arraybackslash} m{1.8 cm}|>{\centering\arraybackslash} m{1.8 cm}|>{\centering\arraybackslash} m{1.8 cm}}
\multicolumn{6}{c}{Impartial Culture }\\
  \hline
  \hline
  
 $n,S$ & Same Winners & 1 Winner Diff. & 2 Winners Diff. & STV chooses CC & RCV chooses CC\\
  \hline
  
 $3,2$ &78.8   &21.2  & 0 &76.6  & 99.9\\
  \hline
  $4,2$ &68.1& 31.8& 0.08& 65.6&94.0  \\
  $4,3$ &65.3 &35.7  & 0 &65.2  &99.7 \\
  \hline
   $5,2$ &61.3  &38.6  &0.12  &58.2  &88.2 \\
   $5,3$ & 49.3& 48.4 & 2.1 &49.7  &92.7 \\
   $5,4$  & 56.1 & 43.8 & 0 &57.2  &99.6 \\
   \hline
  $6,2$ &56.6 & 43.2 &0.006 &53.4 &83.7 \\
   $6,3$ &  40.6& 54.5 & 4.5 & 41.3 &86.0 \\
   $6,4$&  37.9& 56.5 &5.3  &39.4  &92.0 \\
   $6,5$& 49.1 & 50.9 & 0 &50.4  &99.5 \\
   \hline
       \end{tabular}
  
  \vspace{.2 in}
  \begin{tabular}{c|>{\centering\arraybackslash} m{1.8 cm}|>{\centering\arraybackslash} m{1.8 cm}|>{\centering\arraybackslash} m{1.8 cm}|>{\centering\arraybackslash} m{1.8 cm}|>{\centering\arraybackslash} m{1.8 cm}}

\multicolumn{6}{c}{Impartial Anonymous Culture }\\
  \hline
  \hline
  
 $n,S$ & Same Winners & 1 Winner Diff. & 2 Winners Diff. & STV chooses CC & RCV chooses CC\\
  \hline
  
 $3,2$ & 81.3 & 18.3 & 0 &80.8  & 99.9\\
  \hline
  $4,2$ &  65.1& 34.1 & 0.07 &63.3 &94.4 \\
  $4,3$ & 66.9& 32.7 & 0 &67.8  & 100.0\\
  \hline
   $5,2$ &  58.6& 39.3 & 0.04 &56.2  & 88.8\\
   $5,3$ &  47.2& 49.1 &2.3  & 48.1 & 93.3\\
   $5,4$  &  55.8&42.8  &  0&57.3  &99.9 \\
   \hline
   $6,2$ & 54.5& 41.5 & 0.003 &51.7 &83.8 \\
   $6,3$ & 39.0& 53.6 & 4.6 &40.0  &86.3 \\
   $6,4$& 35.8 & 55.1 &5.6  &37.6  &92.2 \\
   $6,5$& 46.6& 50.0& 0 & 48.2 &99.7 \\
   \hline
    \end{tabular}

  \caption{Simulation results using 100,000 runs under the IC (top) and IAC (bottom) models with 1,001 voters. Each entry in the table is the percentage of runs with the corresponding outcome. The Condorcet committee (CC) percentages are conditioned on the existence of a Condorcet committee.}
  \label{IC_IAC_simulations}
  \end{table}

The general takeaway from our simulations is that the two voting methods frequently produce different winner sets under random models, producing the same winners with a probability significantly less than 50\% for some choices of $n$ and $S$. For the smaller choices of $n$ that we use, when the winner sets disagree they tend to differ by only one candidate, but the table suggests that as $n$ increases the sets will differ by more than one candidate. Furthermore, sequential RCV tends to select the Condorcet committee significantly more often than STV.  We note that under random models Condorcet committees often do not exist; the most extreme case is $n=6$, $S=3$, in which such a committee exists in only approximately 42\% of the elections generated under the IC model. This is in stark contrast to real-world elections, in which a Condorcet committee almost always exists (see Sections \ref{scottish_data} and \ref{utah_data}).

It is interesting that the two models of voter behavior generally give similar percentages for each outcome in Table \ref{IC_IAC_simulations}. In other settings the two models can make very different predictions. For example, if the electorate is very large then under the IAC model the probability that a three-candidate election produces a candidate with a majority of first-place votes is approximately 9/16 (this can be calculated using a software package like Normaliz); the corresponding probability under the IC model is approximately zero (this can easily be verified using a binomial distribution).

Under both models, the degrees of misrepresentation and maximal representation tend to be higher under sequential RCV than STV, as expected. For example, Figure \ref{IAC_degrees} shows the distribution of the two degrees for the two methods in the 34166 elections in which the two methods produced different winner sets under the IAC model for the case $n=4$ and $S=2$. The degree of maximal representation in particular tends to be much larger under sequential RCV than STV. Table \ref{degrees_simulations} gives the average degree of each type for elections in which the two methods produced different winner sets (and neither method produces a tie) in the simulations. For example, in the case $n=4$ and $S=3$, under the IAC model the average degree of maximal representation is 33.9\% across the generated elections in which sequential RCV and STV produce different winners sets. For each choice of $n$ and $S$, under either model the average degree of either type is larger under sequential RCV than STV, often much larger.

To conclude this section, we note that we also ran simulations under other choices of $V$ such as 501 and 801. The percentages given by these other choices of $V$ did not markedly differ from the percentages given by $V=1001$, differing only by a percentage point or two. Thus, these results seem to be robust with respect to the number of voters, as long as the electorate is sufficiently large.

\begin{figure}[] 
\begin{center}
\includegraphics[scale=0.45]{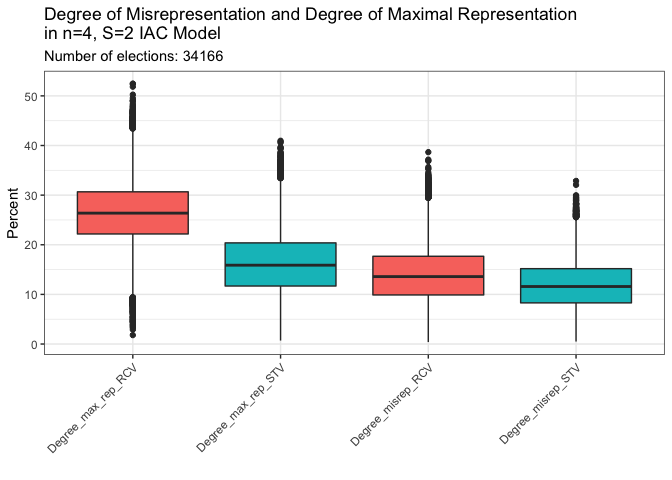}\\

\end{center}
\caption{The degree of misrepresentation and degree of maximal representation under sequential RCV and STV under the IAC model for $n=4$ and $S=2$. The plot was generated using only elections in which the two methods produced different winner sets.}
\label{IAC_degrees}
\end{figure}

 \begin{table}[htb]
  \centering
  
   \begin{tabular}{c|>{\centering\arraybackslash} m{2.5 cm} | >{\centering\arraybackslash} m{2.5 cm}| >{\centering\arraybackslash} m{2.5 cm}| >{\centering\arraybackslash} m{2.5 cm}}
\multicolumn{5}{c}{Impartial Culture }\\
  \hline
  \hline
  
 $n,S$ & Average Deg. Mis. RCV & Average Deg. Mis. STV & Average Deg. Max. Rep. RCV & Average Deg. Max. Rep. STV\\
  \hline
  
 $3,2$ & 0 & 0 & 34.9 & 32.9 \\
  \hline
  $4,2$&   16.2 & 15.8& 17.9 & 16.7\\
  $4,3$& 0&0& 26.3 & 24.7 \\
  \hline
   $5,2$ & 29.0 & 28.5 & 11.0 & 10.1 \\
   $5,3$ &  0&0&10.8&10.1\\
   $5,4$  & 0&0&21.1&19.8 \\
   \hline
  $6,2$ & 38.7&38.2 & 7.4 & 6.8 \\
   $6,3$ & 4.7&4.5&5.6&5.1  \\
   $6,4$& 0&0& 7.3&6.7\\
   $6,5$& 0&0&17.7&16.6 \\
   \hline
       \end{tabular}
  
  \vspace{.2 in}
   \begin{tabular}{c|>{\centering\arraybackslash} m{2.5 cm} | >{\centering\arraybackslash} m{2.5 cm}| >{\centering\arraybackslash} m{2.5 cm}| >{\centering\arraybackslash} m{2.5 cm}}
\multicolumn{5}{c}{Impartial Anonymous Culture }\\
  \hline
  \hline
  
 $n,S$ & Average Deg. Mis. RCV & Average Deg. Mis. STV & Average Deg. Max. Rep. RCV & Average Deg. Max. Rep. STV\\
  \hline
  
$3,2$ & 0 & 0 & 48.6 & 28.0 \\
  \hline
  $4,2$&14.0&11.9&26.4&16.3    \\
  $4,3$& 0&0&33.9&23.3 \\
  \hline
   $5,2$ & 27.1&25.7&13.4&10.4 \\
   $5,3$ &  0&0&13.0&10.2\\
   $5,4$  & 0&0&23.6&19.5 \\
   \hline
  $6,2$ & 38.0&37.2 & 8.0&6.9 \\
   $6,3$ & 4.5&4.3&5.9&5.2  \\
   $6,4$& 0&0&7.7&6.8 \\
   $6,5$& 0&0&18.2&16.5 \\
   \hline
       \end{tabular}

  \caption{The average degrees of the two types for the simulated elections in which sequential RCV and STV produced different winner sets.}
  \label{degrees_simulations}
  \end{table}

\section{Empirical Results: Scottish Data}\label{scottish_data}

In this section we present empirical results using a large database of Scottish local government elections. We begin with a description of these elections, including where the data was found. We then provide results in which we use the number of seats which were available in the original election (most elections contained three or four seats), and we conclude by analyzing the case where we use $S=2$ for all elections.

\subsection{Description of Scottish STV data} 
For the purposes of local government, Scotland is partitioned into 32 council areas, each of which is governed by a council. The councils provide a range of public services that are typically associated with local governments, such as waste management, education, and building and maintaining roads. The council area is divided into wards, each of which elects a set number of councilors to represent the ward on the council. The number of councilors representing each ward is determined primarily by the ward's population, although other factors play a role\footnote{For complete details about how the number of councilors for a ward is determined, see \url{https://boundaries.scot/reviews/fifth-statutory-reviews-electoral-arrangements}.}. In the vast majority of wards, $S=3$ or $S=4$: five elections satisfy $S=2$, 554 satisfy $S=3$, 508 satisfy $S=4$, and three satisfy $S=5$. Every five years each ward holds an election in which all seats available in the ward are filled using the method of STV.

Every Scottish ward has used STV for local government elections since 2007. Preference profiles from the 2007 elections are mostly unavailable. We obtained 2007 ballot data for the 21 ward elections of the Glasgow City Council area from preflib.org \cite{MW}, but we could not find any other preference profiles from the 2007 elections.  We contacted several council election offices and either received no response or were told that the 2007 data is not available.  We obtained preference profile data for the 2012 and 2017 ward elections from the Local Elections Archive Project \cite{T}, although some of this data is still available on various council websites. Some of these profiles required cleaning, and for the 2012 profiles we had to add each candidate's party identification to the data files. The first author obtained data for the 2022 preference profiles directly from the  council websites or by request. We also obtained one multiwinner by-election which occurred in an off-cycle election year. The full dataset is available by request.

In all, we collected the preference profile data of 1,070 multiwinner STV elections from Scotland.  The size of this database generally dwarfs the sizes of databases used in other empirical social choice studies which use preference data from actual elections. The reason is that in most jurisdictions which use STV (such as Australia, Ireland, Northern Ireland, New Zealand, etc.) the ballot data is not made available. We note that there are a few multiwinner preference profiles available from STV political elections in the United States \cite{O}, but we prefer not to include these elections in our Scottish database because as part of our results we analyze the effect of the voting method on the party composition of the winner set, and the American elections for which data is available are non-partisan.

\subsection{Scottish data results}\label{scottish_subsection} We provide a concise summary of the results, providing two examples for context; an interested reader can see complete details for all elections at \cite{Git}.

Of the 1,070 elections, the two voting methods choose different winner sets in 485 of them, so that the methods disagree in 45.3\% of the elections. In 474 of these 485 elections, the winner sets differed by a single candidate while in the remaining 11 elections the winner sets differed by two. None of the elections produced disjoint winner sets. There were 1052 elections containing a Condorcet committee of size $S$ and sequential RCV selects this committee in 1024 of these elections. Thus, in practice sequential RCV  chooses the Condorcet committee with a probability of 97.3\%, assuming such a committee exists. This probability falls to 56.1\% for STV. In the real-world elections in our database, sequential RCV almost always selects the Condorcet committee while STV does not, which accounts for most of the disagreement between the two methods.

A visualization of the degrees of misrepresentation and maximal representation for the 485 elections in which the methods produce different winner sets is given in Figure \ref{scottish_plot1}. Note that both degrees are generally much higher for sequential RCV than STV, which is to be expected if sequential RCV is an excellence-based method.

\begin{figure}[] 
\begin{center}
\includegraphics[scale=0.45]{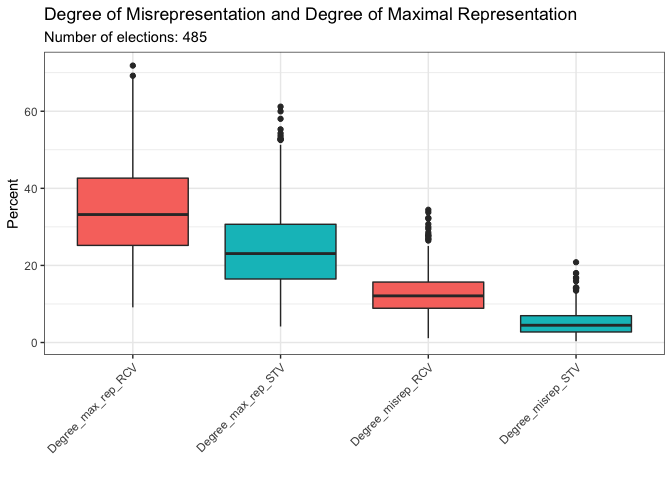}\\

\end{center}
\caption{The degree of misrepresentation and degree of maximal representation under sequential RCV and STV across the 485 elections in which the methods produce different winner sets.}
\label{scottish_plot1}
\end{figure}

Interestingly, there are 19 elections in which the degree of misrepresentation under STV is higher than under sequential RCV, and thus STV does not always perform better under this measure. In these elections the difference between the degrees from the two methods are generally quite small, usually at most one percentage point. The largest difference occurs in the 2012 Ward 12 election of the South Lanarkshire Council area, in which the degree of misrepresentation under sequential RCV (respectively STV) is 11.0\% (respectively 13.3\%).

The Scottish elections are \emph{partisan}, meaning that each candidate officially runs as a member of a party or as an independent. The choice of voting method can significantly affect the party composition of the winner set and thus we summarize the party dynamics under both methods. In 293 elections, STV results in more parties being represented in the winner set than sequential RCV. There are two elections in which  STV results in two more parties being represented in the winner set than what would occur under sequential RCV. There are no elections in which  sequential RCV results in a larger number of parties in the winner set than STV. As with the Condorcet dynamics, we see that sequential RCV does not attempt to achieve proportional representation, instead choosing candidates that are more similar to each other than candidates who would be chosen by STV.

To conclude this section we provide two examples of Scottish elections which provide context for the above results. We use the following abbreviations for Scottish political parties: Conservative (Con), Green (Grn), Independent (Ind), Labour (Lab), Liberal Democrats (LD), and Scottish National Party (SNP).

\begin{ex}\label{Fife_ex}
In the 2012 election of the St. Andrews ward of the Fife Council area, the actual winner set under STV is $\{$McCartney (SNP), Melville (LD), Morrison (Con), Thomson (Lab)$\}$. When using STV the ward is represented by four different parties. Under sequential RCV the winner set is $\{$Melville (LD), Morrison (Con), Sangster (LD), Waterston (LD)$\}$, which is the Condorcet committee. In this case only two parties are represented in the winner set, as the SNP and Labour candidates are replaced by Liberal Democrats. As shown in Table \ref{votes_Fife}, Sangster and Waterston do not have strong electoral support from the perspective of STV, but when using sequential RCV the Liberal Democrats win the first three seats while the fourth seat goes to Morrison. Table \ref{votes_Fife_RCV} shows the votes-by-round tables for the allocation of the first two seats under sequential RCV; note that Sangster essentially just takes the place of the first-seat winner Melville when allocating the second seat.  The same dynamic plays out for Waterston when allocating the third seat. It seems that the Liberal Democrats occupy a middle ground between the Conservative Party and the SNP, so that in the penultimate round the candidate from one of those parties is eliminated and many more votes flow to the remaining Liberal Democrat than to the other surviving candidate. Perhaps if there were a fourth Liberal Democrat in the election, all four seats would have gone to that party under sequential RCV.

In this election, the degree of misrepresentation under sequential RCV is 24.8\% while under STV it is only 8.2\%. Unsurprisingly, when we select three Liberal Democrats instead of candidates from four different parties, the percentage of voters who are not represented by any of the candidates ranked in their top four is much lower. By contrast, the degree of maximal representation under sequential RCV is 21.4\%, higher than the corresponding degree of 15.5\% under STV.
\end{ex}

\begin{table}[htb]
  \centering
  
   \begin{tabular}{c|cccccccc}
  \hline
  Candidate & \multicolumn{8}{c}{Votes by Round}\\
  \hline
  Bridgman (Grn) & 437 & 453 & 464 & & & & & \\
  MacDonald (Ind) & 369 & 453 & 475 & 534 & 569 & 624.3 & 626.1 &    \\
  McCartney (SNP) & 690 & 706 & 725 & 806 & 838 & 867.1 & 867.4 & \textbf{979.3}\\
  Melville (LD) & 641 & 665 & 772 & 858 & \textbf{1189} &&\\
  Morrison (Con) & 769 & 796 & 820 & 843 & \textbf{889} &&\\
  Paul (Ind) &256 &&&&&&&\\
  Sangster (LD) &298 & 311&&&&&&\\
  Thomson (Lab) &600 & 629 & 647 & 742 & 776 & 828.8 & 829.2 & \textbf{951.2}\\
  Waterston (LD) &351 & 381 & 466 & 518 &&&\\
  \hline
  \end{tabular}
  
  \caption{The votes-by-round table for 2012 election of the St. Andrews ward of the Fife Council area, in which quota is 883.}
  \label{votes_Fife}
  \end{table}

\begin{table}[htb]
  \centering
  
   \begin{tabular}{c|cccccccc}
  
  \multicolumn{9}{c}{First Seat, Sequential RCV}\\
  \hline
  \hline
  Candidate & \multicolumn{8}{c}{Votes by Round}\\
  \hline
  Bridgman (Grn) & 437 & 453 & 464 & & & & & \\
  MacDonald (Ind) & 369 & 453 & 475 & 534 &  569 &   &  &    \\
  McCartney (SNP) & 690 & 706 & 725 & 806 & 838 & 925 & 1063 &1123 \\
  Melville (LD) & 641 &       665 & 772 & 858 &1189 & 1317&1501& \textbf{1882}\\
  Morrison (Con) & 769 &   796 &  820 & 843 & 889 & 990 &1062&\\
  Paul (Ind) &256 &&&&&&&\\
  Sangster (LD) &298 &     311&&&&&&\\
  Thomson (Lab) &600 &    629 &   647 & 742 & 776 & 843 &  & \\
  Waterston (LD) &351 &   381 &   466 & 518 &&&\\
  \hline
  \end{tabular}
  
  \vspace{.1 in}
  
     \begin{tabular}{c|cccccccc}
  
  \multicolumn{8}{c}{Second Seat, Sequential RCV}\\
  \hline
  \hline
  Candidate & \multicolumn{7}{c}{Votes by Round}\\
  \hline
  Bridgman (Grn) & 460 & 482 & 527 & & & &  \\
  MacDonald (Ind) & 392 & 479 &  &  &   &     &    \\
  McCartney (SNP) & 711 & 731 & 793&  887&  930&  1070&   \\
  $\xcancel{\text{Melville (LD)}}$ &  &          &  &  &  &  &  \\
  Morrison (Con) & 852 &   796 &  974& 1016&  1092&1166 &1271\\
  Paul (Ind) &278 &&&&&&\\
  Sangster (LD) &608 &     628& 722&811&1191& 1392&\textbf{2120}\\
  Thomson (Lab) &620 &    654 &  702 & 820&  868&  &   \\
  Waterston (LD) &467 &   506 &   553 & 629 &&\\
  \hline
  \end{tabular}

  \caption{(Top) The votes-by-round table for the first seat under sequential RCV in the 2012 election of the St. Andrews ward of the Fife Council area. (Bottom) The votes-by-round table for the second seat under sequential RCV.}
  \label{votes_Fife_RCV}
  \end{table}

The next election gives the largest difference in our database of the degree of maximal representation between the two methods.

\begin{ex}\label{perth-kinross_ex}
In the 2022 election of the Strathtay ward of Perth and Kinross Council area, the actual winner set under STV is $\{$James (Con), Laing (SNP), McLaren (LD)$\}$ while the winner set under sequential RCV is $\{$Kinney (SNP), Laing (SNP), McLaren (LD)$\}$, which is the Condorcet committee. The degree of maximal representation under sequential RCV is 54.5\%, which drops to only 10.4\% under STV. This kind of outcome is what opponents of sequential RCV fear \cite{H}: that a majority of the electorate can choose the entire winner set at the expense of minority interests. The election contained only five candidates-two Conservatives, one Liberal Democrat, and two members of the SNP. In Scotland, Conservatives are generally seen as being on the right side of the political spectrum while the Liberal Democrats and the SNP are generally seen as being to the left, and thus the candidate pool displays a sharp political divide. The two Conservatives earn 34.9\% of the first-place votes and therefore, under proportional representation, one of them should probably earn one of the three seats. However, the remaining 65.1\% of the electorate generally support non-right candidates, and under sequential RCV the more left-wing part of the electorate dictates the composition of the entire winner set. Under STV the more right-wing part of the electorate achieves some representation, as James (Con) achieves quota after receiving a large vote transfer from the elimination of the other Conservative.

The degree of misrepresentation under sequential RCV is 15.3\%, about 12 times higher than the corresponding degree of 1.3\% under STV.
\end{ex}

\subsection{$S=2$ results}

The Utah elections (see Section \ref{utah_data}) which use sequential RCV almost always satisfy $S=2$, and thus we repeat the analysis of Section \ref{scottish_subsection} using two seats in each election. This is perhaps the best way to use the Scottish data to try to predict what might occur in Utah elections if the method continues to be used. The drawback of this analysis is that for most of our original elections $S>2$ and we cannot know if voters would cast the same preference ballots if fewer seats were available. However, our ballot data represents preferences provided in a real election and such a large database of elections should still have some relevance for the $S=2$ setting. Complete details of these results are available at \cite{Git}.

If we set $S=2$ for all 1070 elections in the database, 593 produce different winner sets under the two methods, giving a much higher rate of disagreement than when we use the actual number of seats. When the winner sets are different they differ by only one candidate, as none of the elections produce disjoint winner sets. There are 1045 elections which contain a Condorcet committee of size 2 and in 979 of these sequential RCV selects this committee. This number drops to 437 for STV.  When $S=2$ both methods select the Condorcet committee with a lower frequency than in the original data, but the drop for STV is larger.

The degrees of misrepresentation and maximal representation for the two methods when $S=2$ is given in Figure \ref{scottish_plot2}. For both methods the degree of misrepresentation increases from Figure \ref{scottish_plot1}, presumably because with fewer seats available it is difficult to provide a broad base of representation to the electorate. The degree of maximal representation when $S=2$ provides a stark contrast between the two methods. The degree for sequential RCV does not markedly change from Figure \ref{scottish_plot1}, but the degree for STV drops significantly. The reason for this drop seems to be that STV chooses two candidates who are quite different from each other in an attempt to represent as many voters as possible. When there are more seats available (Figure \ref{scottish_plot1}), STV is able to provide more voters with everything they want, probably in part due to the prevalence of partial ballots in the data.

The party dynamics of the two methods for the $S=2$ case are unsurprising given previous results. The winner set under sequential RCV (respectively STV) selects winners from two different parties in 607 (respectively 1040) elections. STV is much better at providing diversity of party representation, generally producing winner sets with two candidates who are not extremely similar to each other.

\begin{figure}[] 
\begin{center}
\includegraphics[scale=0.45]{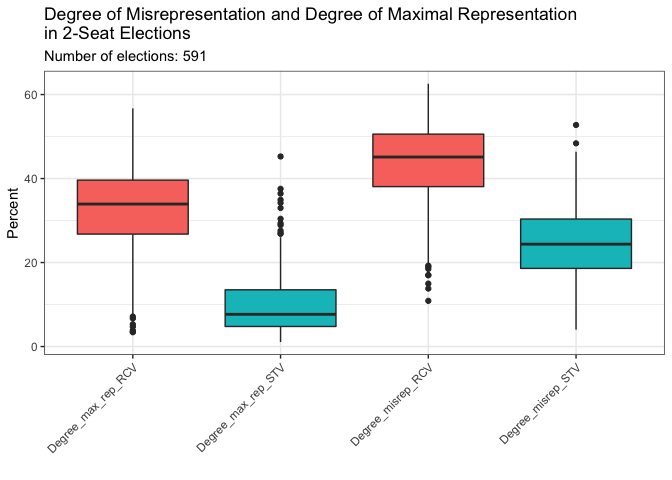}\\

\end{center}
\caption{The degree of misrepresentation and degree of maximal representation under sequential RCV and STV across the 592 elections in which the methods produce different winner sets when $S=2$.}
\label{scottish_plot2}
\end{figure}

\section{Empirical Results: Utah Data}\label{utah_data}

As mentioned in the introduction, since 2019 several cities in Utah have used sequential RCV for multiwinner city council elections. The ballot data is available for eight of these elections \cite{O}. We provide the same analysis for these elections as we did for the Scottish elections, with the exception that we do not analyze party dynamics because the Utah elections are non-partisan. The results are shown in Table \ref{utah_results}. Eight is not a large sample size but the results are consistent with the results of Sections \ref{simulations} and \ref{scottish_data}. Sequential RCV and STV disagree in three of the eight elections. When there is disagreement, sequential RCV chooses the Condorcet committee, which exists in all of these elections and is always chosen by sequential RCV. When the winner sets disagree, the degree of misrepresentation and the degree of maximal representation are smaller for STV than sequential RCV.

 \begin{table}[htb]
  \centering
  {\small
\begin{tabular}{c|c|c|c|c}
 Election &STV Winners & RCV Winners & Mis. Deg. & Max. Rep. Deg.\\
\hline
 2019 Payson & Carter, Hulet, Welton & Carter, Hulet, Welton & 1.4, 1.4 & 44.2, 44.2\\
2019 Vineyard & Flake, Welsh & Flake, Welsh & 18.4, 18.4 & 23.1, 23.1 \\
 2021 Genola  & \textbf{Hughes}, Robison & \textbf{Lundberg}, Robison & 3.4, 1.1 & 46.3, 36.0\\
 2021 Lehi  & Condo, \textbf{Miles} & Condo, \textbf{Hancock}& 47.1, 39.1 & 20.0, 3.3\\
 2021 Moab  &Taylor, Wojchiechowski &Taylor, Wojchiechowski &5.6, 5.6 & 17.5, 17.5\\
2021 Springville  & \textbf{Conover}, Nelson & \textbf{Jensen}, Nelson &35.6, 28.8 & 22.2, 11.0  \\
 2021 Vineyard  &Rasmussen, Sifuentes &Rasmussen, Sifuentes &4.0, 4.0 & 36.3, 36.3\\
 2021 Woodland Hills&Kynaston, Lunt &Kynaston, Lunt &18.3, 18.3&23.1, 23.1  \\

\end{tabular}
}
\caption{The results of the 8 elections in Utah which used sequential RCV. In each election the Condorcet committee of size $S$ equals the winner set under sequential RCV. The degrees of misrepresentation or maximal representation are given in the form RCV, STV.}
  \label{utah_results}
  \end{table}

The election which demonstrates the most extreme difference between the two methods is the 2021 city council election in Lehi, which we analyze in the following example. We cannot provide the preference profile for this election because there were more than 7000 different types of ballots cast (the election contained nine candidates and voters could cast partial ballots, resulting in an extremely large set of possible ballots).

\begin{ex}

\textbf{Lehi, UT 2021 council election.} The actual winner set is $\{$Condo, Hancock$\}$, which is the Condorcet committee and produces a degree of maximal representation of 20\%. If STV were used then the winner set would be $\{$Condo, Miles$\}$ and only 3.3\% of the electorate would get their top two candidates on the council. The degree of maximal representation under sequential RCV is approximately six times larger than under STV. This occurs because Condo is the IRV winner and voters tend to give Condo and Hancock similar rankings on their ballots: in approximately 41.6\% of the ballots Condo and Hancock are ranked consecutively\footnote{We do not count a ranking as ``consecutive'' if one of the candidates is ranked last on a partial ballot and the other is not ranked.}. The next ``most similar'' candidate to Condo is Kunze, who is given a consecutive ranking to Condo on 19.8\% of the ballots. The other STV winner Miles is the fifth-most similar candidate to Condo under this measure, achieving consecutive rankings on only 14.2\% of the ballots.

\end{ex}

The 2021 city council election in Genola contained only three candidates and thus has a small enough preference profile to display (Table \ref{genola_profile}). While the difference between sequential RCV and STV is not as extreme as the Lehi election, the dynamics are similar. Lundberg and Robison are ``similar'' candidates in the way Condo and Hancock are similar (although not to the same degree), and thus Lundberg wins the second seat after Robison (the IRV winner) wins the first under sequential RCV.  

 \begin{table}[htb]
  \centering
  
  \begin{tabular}{l|ccccccccc}
  Num. Voters&13&58& 22&4 & 24 & 60 & 15 & 86 & 96\\
  \hline
  1st choice  & H & H  & H & L & L & L & R & R & R\\
  2nd choice &    & L   & R  &   & H & R &   & H & L\\
  3rd choice  &    & R  & L  &    & R & H &   & L & H\\
  \end{tabular}
  
   \caption{The preference profile for the 2021 city council election in Genola, UT.}
  \label{genola_profile}
  \end{table}


\section{Conclusion}\label{conclusion}

Our results show that sequential RCV is a voting method designed for an excellence-based electoral setting, and therefore generally produces different outcomes than STV. The Monte Carlo simulations and the empirical analysis using elections from Scotland and Utah demonstrate that sequential RCV frequently chooses a different winner set than STV, and this difference is often the result of sequential RCV selecting the Condorcet committee while STV chooses a winner set that aligns with the goal of proportional representation. Our empirical results show that when the two voting methods disagree, generally the degrees of misrepresentation and maximal representation are much higher for sequential RCV than STV. This shows that sequential RCV often creates a winner set with candidates that are more similar to each other (in some sense) than the winner set under STV. 

Consequently, if jurisdictions are considering the adoption of a ranked-choice voting method for multiwinner elections (as is the case for many cities in the United States), we argue that elections officials should not begin by considering specific methods.  Instead, they must first consider the more foundational question ``do we want a method that chooses the best $S$ candidates or a method designed for proportional representation?'' The cities in Utah which use sequential RCV have opted for the former, and we could find no documentation which makes clear that they are aware they made this choice. 

Other jurisdictions are aware of this choice, and thus the analysis in this article is useful for backing up their claims. For example, the city of Northampton, MA, formed a committee in 2021 to explore the possibility of using ranked-choice voting for single-winner and multiwinner municipal elections. In January 2022 the committee circulated a document \cite{Northampton_doc} which compares different choices of multiwinner voting methods, and this documents notes that STV achieves ``Fair representation of voter diversity'' (which we take to be synonymous with proportional representation) while sequential RCV does not. The authors opine that sequential RCV therefore is most likely not an appropriate voting method for city council elections. As a result, the committee ultimately recommended the use of STV \cite{Northampton_rec}.

As a final remark, we note that sequential RCV may be an appropriate method in a political context if the elections are non-partisan and the electorates are not particularly polarized. Utah is generally a politically conservative state and the cities which use sequential RCV are small, suggesting that the electorates may be non-polarized and simply want conservatives to fill the available city council seats. In such a setting, it is perhaps not unreasonable to elect the two or three ``best'' conservatives to the council, filling the seats with candidates who are similar to each other. Of course, even in an excellence-based setting sequential RCV may not be a good method to use, depending on one's criteria for evaluating voting methods. We suspect that many social choice theorists would argue that excellence-based methods like $k$-Borda scoring methods are preferable to sequential RCV, for example. Such questions are beyond the scope of this article and represent fertile ground for future research, especially if other jurisdictions consider the adoption of sequential RCV.


\begin{thebibliography}{99}
\bibitem{ADKRT} Awde, A., Diss, M., Kamwa, E., Rolland, J., \& Tlidi, A. ``Social unacceptability -- freezed code''. \url{https://plmlab.math.cnrs.fr/jrolland/social-unacceptability-freezed-code}. Accessed April 10, 2022.


\bibitem{CGZ} D. Cervone, W. Gehrlein, and W. Zwicker. (2005). Which Scoring Rule Maximizes Condorcet Efficiency Under IAC?, \emph{Theory and Decision}, \textbf{58}: 145-185.

\bibitem{CC} J. Chamberlin and P. Courant. (1983). Representative Deliberations and Representative Decisions: Proportional Representation and the Borda Rule, \emph{The American Political Science Review}, \textbf{77} (3): 718-733.

\bibitem{CJMW} Y. Cheng, Z. Jiang, K. Munagala, and K. Wang. Group Fairness in Committee Selection, \emph{ACM Transactions on Economics and Computation}, \textbf{8} (4): 1-18.

\bibitem{DGK} Darmann A., Grundner J., \& Klamler C. (2019).  Evaluative voting or classical voting rules: Does it make a difference? Empirical evidence for consensus among voting rules, \emph{European Journal of Political Economy}, 59, 345-353.

\bibitem{DKT} M. Diss, E. Kamwa, and A. Tlidi. (2018). The Chamberlin-Courant Rule and the k-Scoring Rules: Agreement and Condorcet Committee Consistency. Available at SSRN:  \url{http://dx.doi.org/10.2139/ssrn.3198184}.

\bibitem{D} M. Dummet. \emph{Voting Procedures}. Osfrod Clarendon Press, (1984).

\bibitem{EFLSST} E. Elkind, P. Faliszewski, J-F. Laslier, P. Skowron, A. Slinko, and N. Talmon. (2017). What Do Multiwinner Voting Rules Do? An Experiment Over the Two-Dimensional Euclidean Domain. In \emph{Thirty-First AAAI Conference on Artificial Intelligence}, \textbf{31} (1): 494-501.

\bibitem{EFSS} E. Elkind, P. Faliszewski, P. Skowron, and A. Slinko. (2017). Properties of Multiwinner Voting Rules, \emph{Social Choice and Welfare}, \textbf{48} (3): 599-632.

\bibitem{FSST} P. Faliszewski, P Skowron, A. Slinko, and N. Talmon. Multwinner Voting: A New Challenge for Social Choice Theory. In Ednriss, Ulle (ed.). \emph{Trends in Computational Social Choice}. (2017). 

\bibitem{F} D. Farrell. \emph{The Australian electoral system: origins, variations and consequences}. University of New South Wales Press, (2005).

\bibitem{G} W. Gehrlein. (1985). The Condorcet criterion and committee selection. \emph{Mathematical Social Sciences}, \textbf{10} (3): 199-209.

\bibitem{GL} W. Gehrlein and D. Lepelley. (2017). \emph{Elections, Voting rules, and Paradoxical Outcomes}. Springer.

\bibitem{Git} GitHub link, anonymized for review.

\bibitem{GSZ} Graham-Squire A. \& Zayatz N. (2021). Lack of Monotonicity Anomalies in Empirical Data of Instant-runoff Elections, \emph{Representation}, \textbf{57} (4): 565-573.

\bibitem{H} S. Hutson. Experts Say Utah’s Ranked-Choice Voting Method For Multi-Seat Races Favors The Majority Over Proportional Representation. (2021). Politics \& Government post at kuer.org: \url{https://www.kuer.org/politics-government/2021-06-03/experts-say-utahs-ranked-choice-voting-method-for-multi-seat-races-favors-the-majority-over-proportional-representation}. Accessed January 10, 2023.

\bibitem{LLS} D. Lepelley, A. Louichi, and H. Smaoui. (2008). On Ehrhart polynomials and probability calculations in voting theory, \emph{Social Choice and Welfare}, \textbf{30}: 363-383.

\bibitem{LLV} D. Lepelley, A. Louichi, and F. Valognes. (2000). Computer simulations of voting systems, \emph{Advances in Complex Systems}, \textbf{3} (1): 181-194.

\bibitem{MW} N. Mattei and T. Walsh. (2013). Preflib: A library for preferences. In \emph{ADT}, 259–270. Springer.


\bibitem{MMa} D. McCune and L. McCune. (2022). Does the Choice of Preferential Voting Method Matter? An Empirical Study Using Ranked Choice Elections in the United States, \emph{Representation}. \url{https://doi.org/10.1080/00344893.2022.2133003}

\bibitem{Northampton_doc} Northampton Massachusetts Ranked Choice Voting Committee Meeting Document for January 12, 2022. Available at \url{https://northamptonma.gov/AgendaCenter/ViewFile/Agenda/_01122022-6159?packet=true}. Accessed May 26, 2023.

\bibitem{Northampton_rec} Northampton Massachusetts Ranked Choice Voting Committee Meeting Document for March 2, 2022. Available at \url{https://www.northamptonma.gov/AgendaCenter/ViewFile/Agenda/_03022022-6289?packet=true}. Accessed May 26, 2023.



\bibitem{O} D. Otis, Single winner ranked choice voting CVRs. \url{https://doi.org10.7910/DVN/AMK8PJ}, Harvard Dataverse, V5 (2022).

\bibitem{PPR} S. Popov, A. Popova, and M. Regenwetter. (2014). Consensus in Organizations: Hunting for the Social Choice Conundrum in APA Elections. \emph{Decision} 1 (2): 123-146. \url{https://doi.org/10.1037/dec0000010}


\bibitem{R} T. Ratliff. (2003). Some startling inconsistencies when electing committees, \emph{Social Choice and Welfare}, \textbf{21} (3): 433-454.

\bibitem{RKH} Regenwetter M., Kim A., \& Ho M. (2007). The Unexpected Empirical Consensus Among Consensus Methods. Psychological Science 18 (7), 629-635.


\bibitem{S-FF} L. S\'{a}nchez-Fern\'{a}ndez and J. Fisteus. (2019). Monotonicity axioms in approval-based multi-winner voting rules. Preprint: https://arxiv.org/abs/1710.04246

\bibitem{S} W. Sanders. (2011). Alice's Unrepresentative Council: Cause for Intervention?, \emph{Australian Journal of Political Science}, \textbf{46} (4): 699-706.

\bibitem{San} J. Santucci. (2021). Variants of Ranked-Choice Voting from a Strategic Perspective, \emph{Politics and Governance}, \textbf{9} (2): 344-353.

\bibitem{Song}Song C.G. (2022). Three Empirical Analyses of Voting $[$Unpublished doctoral dissertation$]$. Virginia Tech.


\bibitem{T} A. Teale. Local Elections Archive Project. Accessed April 16, 2022. \url{https://www.andrewteale.me.uk/leap/}.

\bibitem{WP} M. Wilson and G. Pritchard. (2007). Probability calculations under the IAC hypothesis, \emph{Mathematical Social Sciences}, \textbf{54}: 244-256.

\end{thebibliography}
\end{document}